\newcommand{\R}{\mathbb{R}}
\newcommand{\pdef}{\mathrel{\mathop:}=}
\theoremstyle{plain}		\newtheorem{prop}{Proposition}
\theoremstyle{definition}	\newtheorem{dtion}{Definition}
\DeclareMathOperator \sgn {sgn}
\title[Analysis of Compactly Constructed Time Machine Spacetimes]{Geometric Analysis of Particular Compactly Constructed Time Machine Spacetimes}
\author{J.~Dietz}
\author{A.~Dirmeier$^{*}$}
\author{M.~Scherfner$^{**}$}
\address{Dept. of Mathematics, TU Berlin, Str. d. 17. Juni 136, 10623 Berlin, Germany}
\email{$^*$dirmeier@math.tu-berlin.de}
\email{$^{**}$scherfner@math.tu-berlin.de}
\begin{document}


\begin{abstract}
We formulate the concept of time machine structure for spacetimes exhibiting a compactely constructed region with closed timelike curves. After reviewing essential properties of the pseudo Schwarzschild spacetime introduced by A. Ori, we present an analysis of its geodesics analogous to the one conducted in the case of the Schwarzschild spacetime. We conclude that the pseudo Schwarzschild spacetime is geodesically incomplete and not extendible to a complete spacetime. We then introduce a rotating generalization of the pseudo Schwarzschild metric, which we call the the pseudo Kerr spacetime. We establish its time machine structure and analyze its global properties. 
\end{abstract}

\maketitle

\section*{Introduction}
While it is easy to construct spacetimes containing closed timelike or closed causal curves (CTCs or CCCs), it is more intractable to find examples of spacetimes that are at the same time realistic, at least to some degree. Nor is there a common agreement on which properties should be included in a list classifying how close to physical reality a spacetime with CCCs is. In \cite{Ori07}, Ori presents such a list, setting constraints on what should be classified as a realistic time machine model.\\
One essential notion in General Relativity is the Cauchy development or domain of dependence of a subset $\mathcal A$ of a spacetime $M$. It is connected with the problem of formulating the Einstein equations as a well posed initial value problem and the concept of global hyperbolicity.\\
From a realistic point of view it is desirable to obtain CCCs as the unevitable consequence of a Cauchy development. However, the interior of the Cauchy development of an achronal subset $\mathcal A$ is globally hyperbolic and therefore excludes any CCCs. The most one can hope for is that there are points in the boundary of the Cauchy development which lie on CCCs. This is the central idea of time machine structure (TM-structure) for a spacetime.
\section*{TM-Structure}
We will consider $M$ to be a four-dimensional manifold and a metric tensor $g$ on $M$ with signature $(-+++)$, such that the Lorentzian manifold $(M,g)$ together with a fixed time-orientation is a spacetime. Let $\mathcal V(M)$ denote the set of all points $p \in M$ that lie on some closed causal curve in $M$. We shall refer to $\mathcal V(M)$ as the causality violating region or the time machine. If $p \in \mathcal V(M)$, we say that causality is violated at $p$. The causality condition is said to hold on a subset $\mathcal U \subset M$ if $\mathcal V(M) \cap \mathcal U = \emptyset$.

\begin{dtion} \label{def-TM-structure}
 We say that $(M,g)$ has TM-structure if the following three properties hold for $M$:
\begin{enumerate} [(TM1)]
 \item There exists an open subset $\mathcal U$ of $M$ on which the causality condition holds,
 \item there is a spacelike hypersurface $\mathscr S$ contained in $\mathcal U$, such that
 \item for an achronal, compact subset $\mathscr C \subset \mathscr S$ the future domain of dependence of $\mathscr C$ contains points in its boundary at which causality is violated, i.e. $\overline{D^+(\mathscr C)} \cap \mathcal V(M) \neq \emptyset$. In this case the causality violating region $\mathcal V(M)$ is said to be \textit{compactly constructed}.
\end{enumerate}
\end{dtion}

This seems to be a reasonable condition for any spacetime that may be viewed as a time machine model. Although there definitely are many more additional conditions one could impose, we will focus on TM-structure here. It should be noted that models with TM-structure can satisfy the weak, strong and dominant energy condition (see \cite{HawEllis}) for the matter content of the spacetime if properly constructed. Indeed the metrics analyzed here are all vacuum spacetimes and also the TM-models constructed in \cite{Ori94} and \cite{Ori05} obey all energy conditions. As discussed in \cite{Ori93} and \cite{Ori07} this does not contradict the findings by Hawking in \cite{Hawking1992}, which introduced the concept of a \textit{compactly generated} causality violating region, which differs from condition (TM3). Hawking proved that a compactly generated time machine necessarily violates the weak energy condition.

\section*{Pseudo Schwarzschild Spacetime}
The pseudo Schwarzschild spacetime was introduced in \cite{Ori07}, where it is presented as part of a viable time machine model. As such, Ori uses the pseudo Schwarzschild spacetime as the internal core of a time machine model and matches it to an intermediate dust region which in turn is enclosed by an outer asymptotically flat vacuum region. In this section we will be occupied with the vacuum core, represented by the pseudo Schwarzschild spacetime and, after describing its geometry and some of its structure, we focus on the behaviour of geodesics.\\

The geometry of the pseudo Schwarzschild spacetime can be written in warped product form with $Z\pdef S^1 \times \R^+$ as base and the hyperbolic plane $H^2$ as fiber. On the cylinder $Z$ we use coordinates $(\nu, r)$, where $\nu$ is taken to be a circular coordinate (angle) on $S^1$ and $r$ represents the identity on $\R^+$. The metric on $Z$ then reads
\begin{equation*}
 g_1 \pdef \left( 1- \frac{2m}{r} \right) d\nu^2 + 2 d\nu dr
\end{equation*}
with an arbitrary positive constant $m$.\\
On the hyperbolic plane $H^2$ we introduce the so-called pseudo spherical polar coordinates $(\theta, \phi)$
in which the metric of the hyperbolic plane becomes
\begin{equation*}
 g_2 \pdef  d\theta^2 + \sinh^2(\theta) d\phi^2 .
\end{equation*}
Strictly speaking, this is the metric of the hyperbolic plane with Gaussian curvature $-1$. Note that here $\theta \in (0,\infty)$ and $\phi \in S^1$ is another circular coordinate. If we think of $H^2$ to be represented topologically by $\R^2$, pseudo spherical polar coordinates are described by the mapping
\begin{equation*} \label{equ-Pseudo-sph-polar-coord}
 (0,\infty) \times S^1 \to \R^2, \quad (\theta , \phi) \mapsto (\sinh \theta \cos \phi, \sinh \theta \sin \phi)
\end{equation*}
which makes clear that $(\theta,\phi)$ cover all of $H^2$ except the origin $(0,0)$.
Expressed as a warped product with warping function $r$, the pseudo Schwarzschild spacetime is
\begin{equation*}
 M \pdef Z \times_r H^2
\end{equation*}
and joining the coordinate systems for the cylinder $Z$ and the hyperbolic plane $H^2$ allows us to 
write down the pseudo Schwarzschild metric:
\begin{equation*} \label{equ-metric}
 g =  \left( 1- \frac{2m}{r} \right) d\nu^2 + 2 d\nu dr + r^2 ( d\theta^2 + \sinh^2(\theta) d\phi^2 )
\end{equation*}
This expression bears several similarities to the usual Schwarzschild metric in (outgoing) Eddington-Finkelstein coordinates. In the Schwarzschild metric, the fiber is not the hyperbolic plane but the sphere $S^2$ for which spherical coordinates are generally used. Moreover, instead of the cylinder $Z$, the base of the Schwarzschild metric is the half plane $\R \times \R^+$ with the negative of our metric for $Z$.
By analogy to the Schwarzschild metric we define the horizon $H$ to be the set of points satisfying $r=2m$ and divide the spacetime into the regions $R_1$ and $R_2$, where $r>2m$ and $r<2m$, respectively. Away from the horizon
$H$ we may undo the Eddington-Finkelstein transformation by considering
\begin{equation}\label{newcoord}
 M\setminus H \to M\setminus H, \qquad (\nu,r,\theta,\phi) \mapsto (\bar \nu, \bar r, \theta, \phi), \qquad \text{where} \quad
 \begin{cases} 
  \; \bar \nu 		\pdef -\nu - r^* \\
  \; \bar r 	\pdef r
 \end{cases}
\end{equation}
and $r^* \pdef r + 2m \ln \left| \frac{r}{2m} -1 \right|$. Note that $\bar \nu$ is still a circular coordinate.
By virtue of this transformation, the metric is converted into
\begin{equation} \label{newmetric}
 g = \left(1-\frac{2m}{\bar r} \right) d\bar\nu^2 -\left(1-\frac{2m}{\bar r} \right) ^{-1} d\bar r^2 + \bar r^2(d\theta^2 + \sinh^2(\theta) d\phi^2).
\end{equation}
Here again, we discern some structural resemblance to the Schwarzschild metric in usual coordinates.
Yet there are important differences. Unlike the radial coordinate in the Schwarschild metric, $\bar r$ is here timelike throughout $R_1$ and spacelike in $R_2$. We will take the time orientation of the pseudo Schwarzschild spacetime to be such, that $-\partial_{\bar r}$ is future pointing where it is timelike. This implies that any future pointing causal curve $\alpha$ in $R_1$ approaches the horizon $H$, since 
\begin{equation*}\label{prop5pr1}
 g(\alpha ' , -\partial_{\bar r} ) = \left( 1- \frac{2m}{\bar r} \right) ^{-1} \dot{\bar r} < 0.
\end{equation*}
Hence, $\dot {\bar r} = \dot r <0$ in $R_1$ (here, the dot refers to differentiation w.r.t.~the curve parameter).

The presentation so far shows that the coordinate singularity $\bar r =2m$ in the last representation of the metric is easy to overcome. But at $r=0$ we are confronted with a real geometric singularity as a computation of the Kretschmann scalar shows. One obtains
\begin{equation*}
 R_{abcd}R^{abcd}=\frac{48m^2}{r^6}
\end{equation*}
which obviously diverges for $r \to 0$.

The symmetries of the pseudo Schwarzschild spacetime derive from its warped product structure. Apart from the symmetries resulting from the fact that the hyperbolic plane $H^2$ is homogeneous and isotropic, which gives rise to three Killing fields, the fourth and last Killing field is given by $\partial_\nu$ since the metric coefficients do not depend on $\nu$.

As noted in \cite{Ori07}, every point of the region $R_2$ is situated on some CTC, namely on one of the curves of constant $r, \theta, \phi$, whereas the region $R_1$ contains no points on CTCs as it is foliated by the spacelike hypersurfaces $r=const$. Furthermore, Ori establishes the TM-structure of the pseudo Schwarzschild spacetime.
If we choose $r_0 \pdef 3m$, he shows that the future domain of dependence of the compact subset $\mathscr C \pdef \{ p \in \mathscr S \colon \theta(p) \leq \theta_0 \}$ with $\theta_0 > \ln (2 + \sqrt{3})$ contains points of the horizon $H$ in its boundary. Since $H$ is generated by null curves, this implies TM-structure.

\subsection*{Geodesics}
As we expect, the geodesic equations of the pseudo Schwarzschild space\-time bear structural similarities to the equations for the usual Schwarzschild spacetime.
Directly computing the Christoffel symbols and assembling the equations results in:
\begin{align*}
 \ddot{\nu} - \frac{m}{r^2} \dot{\nu}^2 - r \dot{\theta}^2 - r \sinh^2(\theta) \dot{\phi}^2  &=0,	  \\
 \ddot{\theta} + \frac{2}{r} \dot{r} \dot{\theta} - \cosh \theta \sinh \theta \dot{\phi}^2 &=0,	  \\
 \ddot{\phi} + \frac{2}{r} \dot{r} \dot{\phi} + 2 \frac{\cosh \theta}{\sinh \theta} \dot{\theta} \dot{\phi} &=0,     \\
 \ddot{r} + \frac{m}{r^3}(r-2m) \dot{\nu}^2 + \frac{2m}{r^2} \dot{\nu} \dot{r} + (r-2m) \dot{\theta}^2 + (r-2m) \sinh^2 (\theta)
 \dot{\phi}^2 &=0. 	 
\end{align*}
Furthermore we have the metric condition resulting from the causal character of the geodesic:
\begin{equation} \label{equ-mcon}
 \left( 1- \frac{2m}{r} \right) \dot{\nu}^2 + 2 \dot{\nu} \dot{r} + r^2 \dot{\theta}^2 + r^2 \sinh^2( \theta) \dot{\phi}^2 =k
\end{equation}
We have $k=-1,0,1$ for a timelike, null or spacelike geodesic, respectively.\\
Due to the homogeneity of the hyperbolic plane, we can assume $\phi=0$ (this is similar to assuming
$\theta = \frac{\pi}{2}$ in the Schwarzschild geometry, where, just the same, it could be $\phi$ which is assumed constant; here we have to use $\phi$, for $\theta>0$ always). Any geodesic is then obtained by applying an appropriate isometry of the hyperbolic plane to a geodesic with $\phi=0$.

The Killing fields
\begin{align*}
K_1 &= \partial_\nu, 	\\ 
K_2 &= -\cos(\phi) \partial_\theta + \sin(\phi) \coth(\theta) \partial_\phi,	
\end{align*}
where $K_1$ is obvious and $K_2$ stems from the symmetries of $H^2$, lead to the first integrals
\begin{align}
 \left( 1- \frac{2m}{r} \right) \dot \nu + \dot r &= \mathcal C, \label{energy} \\
 r^2 \dot \theta &= \mathcal H,					\label{momentum}
\end{align}
and substituting them into the metric condition \eqref{equ-mcon} yields 
\begin{equation} \label{equ-potential-r}
 \mathcal C^2 = \dot r ^2 + V(r)
\end{equation}
with
\begin{equation*} \label{potential}
 V(r) \pdef \frac{2m\mathcal H^2}{r^3} - \frac{\mathcal H^2}{r^2} - \frac{2mk}{r} + k = 
\left( 1- \frac{2m}{r} \right) \left( k- \frac{\mathcal H^2}{r^2} \right).
\end{equation*}
Note that $V(r)$ is exactly the negative of the Schwarzschild potential.
The behaviour of $r$ can now be deduced from the properties of the potential $V$.

\begin{prop} \label{prop-potential}
The potential $V$ satisfies (suppose $V \neq 0$):
\begin{enumerate}
  \item $V(r) = 0 \quad \Leftrightarrow \quad \begin{cases}
                                   	r = 2m 			\quad 	&\text{if} \quad k \neq 1 \text{ or } \mathcal H=0,\\
					r \in \{2m, |\mathcal H|\}		&\text{if} \quad k=1 \text{ and } \mathcal H \neq 0;		
                                  \end{cases}
	$\\[0.2cm]
	For the second case we set $r_0 \pdef \min \{2m,|\mathcal H|\}$ and $r_1 \pdef \max \{2m,|\mathcal H|\}$.
  \item $\lim_{r \to \infty} V(r) = k \quad \text{and} \quad \lim_{r \to 0} V(r) = \begin{cases}
                                                                                    + \infty \quad &\text{if} \quad \mathcal H \neq 0 \text{ or } k=-1,\\
										    - \infty	& \text{if} \quad \mathcal H=0 \text{ and } k=1;
                                                                                   \end{cases}$
  \item $V'(r) = 2r^{-4}(mkr^2 + \mathcal H^2r - 3m\mathcal H^2)$ and if $k\neq 1$, then
	\begin{equation*}
	V'(r) < 0 \qquad  \text{for} \quad  r \leq 2m. 
	\end{equation*}
	If $k=1$ and $\mathcal H \neq 0$, then
	\begin{equation*}
	V'(r) = 0 \quad \Leftrightarrow \quad  r = r_2 \pdef \frac{\mathcal H}{2m} \left( -\mathcal H + \sgn(\mathcal H) \sqrt{\mathcal H^2+12m^2} \right)
	\end{equation*}
	and $r_0 \leq r_2 \leq r_1$.
  \label{prop-potential-derivative}
 \end{enumerate}
\end{prop}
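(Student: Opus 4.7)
The plan is to treat the three parts in order, exploiting the two equivalent forms of $V$ stated in the Proposition: the factored form $V(r) = (1-2m/r)(k-\mathcal H^2/r^2)$ for zeros and limits at infinity, and the expanded form $V(r) = 2m\mathcal H^2 r^{-3} - \mathcal H^2 r^{-2} - 2mk r^{-1} + k$ for limits at $0$ and the derivative.

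For part (1) I would read off the roots from the factored form: a zero of $V$ corresponds to either $r = 2m$ or $k r^2 = \mathcal H^2$. The second factor only produces a positive root when $k = 1$ and $\mathcal H \neq 0$, giving $r = |\mathcal H|$; in every other case (subject to the standing assumption $V \not\equiv 0$, which rules out $k = \mathcal H = 0$) the only zero is $r = 2m$. The definition of $r_0, r_1$ then just orders the two roots.

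For part (2), the limit $\lim_{r\to\infty} V(r) = k$ is immediate from the factored form since both factors tend to finite limits. For the limit at $0$ I would use the expanded form and inspect the dominant term. If $\mathcal H \neq 0$, the term $2m\mathcal H^2 r^{-3}$ dominates and is positive, giving $+\infty$. If $\mathcal H = 0$ the expression collapses to $k - 2mk/r$, whose sign at small $r$ is that of $-k$; thus $+\infty$ for $k=-1$ and $-\infty$ for $k=1$, while $k=0$ is excluded by $V \not\equiv 0$.

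For part (3), differentiating the expanded form yields the stated formula for $V'(r)$. When $k \neq 1$ (so $k\in\{-1,0\}$), I would estimate the bracket $mkr^2 + \mathcal H^2 r - 3m\mathcal H^2$ directly on $(0,2m]$: for $k=0$ it equals $\mathcal H^2(r-3m) < 0$, and for $k=-1$ the bound $\mathcal H^2 r \le 2m\mathcal H^2$ gives $-mr^2 + \mathcal H^2 r - 3m\mathcal H^2 \le -mr^2 - m\mathcal H^2 < 0$, so $V'<0$. When $k=1$ and $\mathcal H \neq 0$, the bracket is a quadratic in $r$ whose constant term $-3m\mathcal H^2$ and leading coefficient $m$ have opposite signs, so by Vieta the two real roots have opposite signs; the unique positive root is the larger one, yielding precisely the stated formula for $r_2$ (the two cases $\mathcal H \gtrless 0$ are unified by the $\sgn(\mathcal H)$ factor). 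Finally, for the inclusion $r_0 \le r_2 \le r_1$, I would invoke Rolle's theorem: $V(r_0) = V(r_1) = 0$ from part (1), so $V'$ must vanish somewhere in $[r_0, r_1]$, and since $r_2$ is the unique positive critical point, it must lie in that interval (with equality throughout in the degenerate case $|\mathcal H| = 2m$, which one verifies by direct substitution). The only mildly delicate point is handling the $\sgn(\mathcal H)$ bookkeeping in the quadratic formula, but this is routine once one writes $\sqrt{\mathcal H^4 + 12m^2\mathcal H^2} = |\mathcal H|\sqrt{\mathcal H^2 + 12m^2}$.
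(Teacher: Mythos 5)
Your proof is correct. The paper states this proposition without any proof at all, treating it as routine calculus, so there is nothing to compare against: your verification --- reading the zeros and the limit at infinity off the factored form, the limit at $r\to 0$ and the derivative off the expanded form, isolating the unique positive root of $mkr^2+\mathcal H^2 r-3m\mathcal H^2$ via the sign of the product of roots, and placing $r_2$ in $[r_0,r_1]$ by Rolle's theorem (with the degenerate case $|\mathcal H|=2m$ checked by substitution) --- is exactly the computation the authors leave to the reader, and every step checks out.
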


We now suppose that we are given a geodesic $\gamma$ defined on some interval $(a,b)$ around zero approaching the horizon $H$, i.e. $\lim_{s \to b}  r(s) = 2m$ and proceed under the assumptions $V \neq 0$ and $\mathcal C \neq 0$ and that $r(s) \neq 2m$ except at isolated values of $s$. From \eqref{equ-potential-r} and the foregoing proposition, we see that  $\lim_{s \to b} \dot r(s) = \pm \mathcal C$. Therefore, judging from the expression (compare \eqref{energy})
\begin{equation} \label{equ-nu-cross-horizon}
 \dot \nu = \frac{\mathcal C - \dot r}{1-\frac{2m}{r}} = \frac{k-\frac{\mathcal H^2}{r^2}}{\mathcal C+\dot r} \qquad \text{for } r(s) \neq 2m,
\end{equation} 
we distinguish two cases for $\gamma$:\\

\textbf{(I) Geodesics that cross the horizon:}\\
This is the case of $\lim_{s \to b} \dot r(s) = \mathcal C$. Here $\nu$ does not behave singularly 
as the geodesic approaches the horizon. Instead,
\begin{equation*}
 \lim_{s \to b} \dot \nu(s) = \frac{k-\frac{\mathcal H^2}{4m^2}}{2\mathcal C},
\end{equation*}
i.e. $\gamma$ leaves the region $R_1$ and enters $R_2$ or vice versa, as the case may be.\\

\textbf{(II) Incomplete geodesics ending at the horizon:}\\
Here we have $\lim_{s \to b} \dot r(s) = -\mathcal C$. We immediately see from \eqref{equ-nu-cross-horizon} that no matter where the geodesic begins, in $R_1$ or in $R_2$, we have $\lim_{s \to b} \dot \nu (s)= +\infty$. 
Whenever this case occurs in the discussion that follows, it will be clear from the form of the potential that $b<\infty$. Such a geodesic can never be extendible since any extension to an interval $(a,\tilde b)$, $b<\tilde b$ would imply $\lim_{s\to b} \nu(s) < \infty$.
Hence these geodesics are incomplete.\\

Because of this geodesic incompleteness, the embedded $\theta=\textrm{const}$ and $\phi=\textrm{const}$ cylinders resemble more the situation in two-dimensional Misner spacetime \cite{Misner1967} and the pseudo Schwarzschild spacetime behaves globally rather similar to Taub-NUT space (see the analysis in \cite{HawEllis}) than it does to the usual Schwarzschild black hole. Similar to Misner spacetime one could consider the ingoing pseudo Finkelstein transformation of coordinates corresponding to (\ref{newcoord}) (i.e.~$\nu=-\bar{\nu}+r^{\ast}$) resulting in a metric
\[
 \tilde{g}_1 := \left( 1- \frac{2m}{r} \right) d\nu^2 - 2 d\nu dr 
\]
on $Z$. This transformation makes the null geodesics on $Z$ complete that were left incomplete by (\ref{newcoord}) and vice versa. But exactly as in Misner spacetime, and in contrast to Schwarzschild spacetime, one cannot carry out both transformations at the same time and hence it is impossible set up a pseudo Kruskal maximal analytic extension of the pseudo Schwarzschild spacetime which is geodesically complete at the horizon\footnote{However, one can carry out both transformations simultaniously if one is prepared to introduce the notion of a non-Hausdorff spacetime \cite{HawEllis}. But we will not choose this mathematically rather cumbersome path.}. By using Prop.~6.20 in \cite{BeemEhrlich} we can give a rigorous proof of the inextendibility of the pseudo Schwarzschild spacetime: It suffices to analyze incomplete geodesics on some embedded cylinder $Z$ for $\theta=\textrm{const}$ and $\phi=\textrm{const}$. Assume any incomplete geodesic on $Z$ approaching the horizon $H$. As the metric is completely regular on $H$ and $H\cap Z$ is given by $r=2m$, we would have to add a point also obeying $r=2m$ to $Z$ in order to find a local extension of $Z$ that makes the geodesic extendible to $H$. But as $H\cap Z$ is compact, the resulting space would not be a manifold any more. 

Thus we face the problem of imprisoned incomplete geodesics in the pseudo Schwarzschild spacetime. Similar to the situation in Taub-NUT space these incomplete imprisoned geodesics end at the horizon, but do not correspond to any kind of curvature singularity. Following the reasoning of Section 8.5 in \cite{HawEllis} this is due to the special nature of the Riemann curvature tensor of the pseudo Schwarzschild spacetime, which is a vacuum solution of Einstein's equations. If there is only a slight deviation from the vacuum condition, Prop.~8.5.2 in \cite{HawEllis} yields the existence of curvature singularities along the imprisoned incomplete geodesics. As we will show below that all causal geodesics in the pseudo Schwarzschild spacetime are imprisoned incomplete, there follows a potentially quite unpleasant fate of massive observers approaching the time machine on geodesic world lines. Even the disturbance of the vacuum pseudo Schwarzschild spacetime by the observer's own mass may eventually lead to unboundedly large curvature forces along his world line in finite proper time. 

\subsubsection*{Causal Geodesics}
The qualitative behavior of $r$ can be seen from Figure \ref{fig-potential-causal}.
In both cases (A) and (B) indicated there, the geodesic approaches the horizon. 
Depending on the sign of $\mathcal C$ it may cross $H$ or end there. In the interesting case of a causal geodesic 
starting in the region $R_1$ with $\dot r(0) <0$, even if $\gamma$ crosses $H$, 
it will be reflected off at some $r_{min}$ and reach the horizon again.
This time it will surely not cross it but end there, since the sign of $\dot r$ is now reversed.

\begin{figure}[h]  \centering
 \includegraphics[scale=0.5]{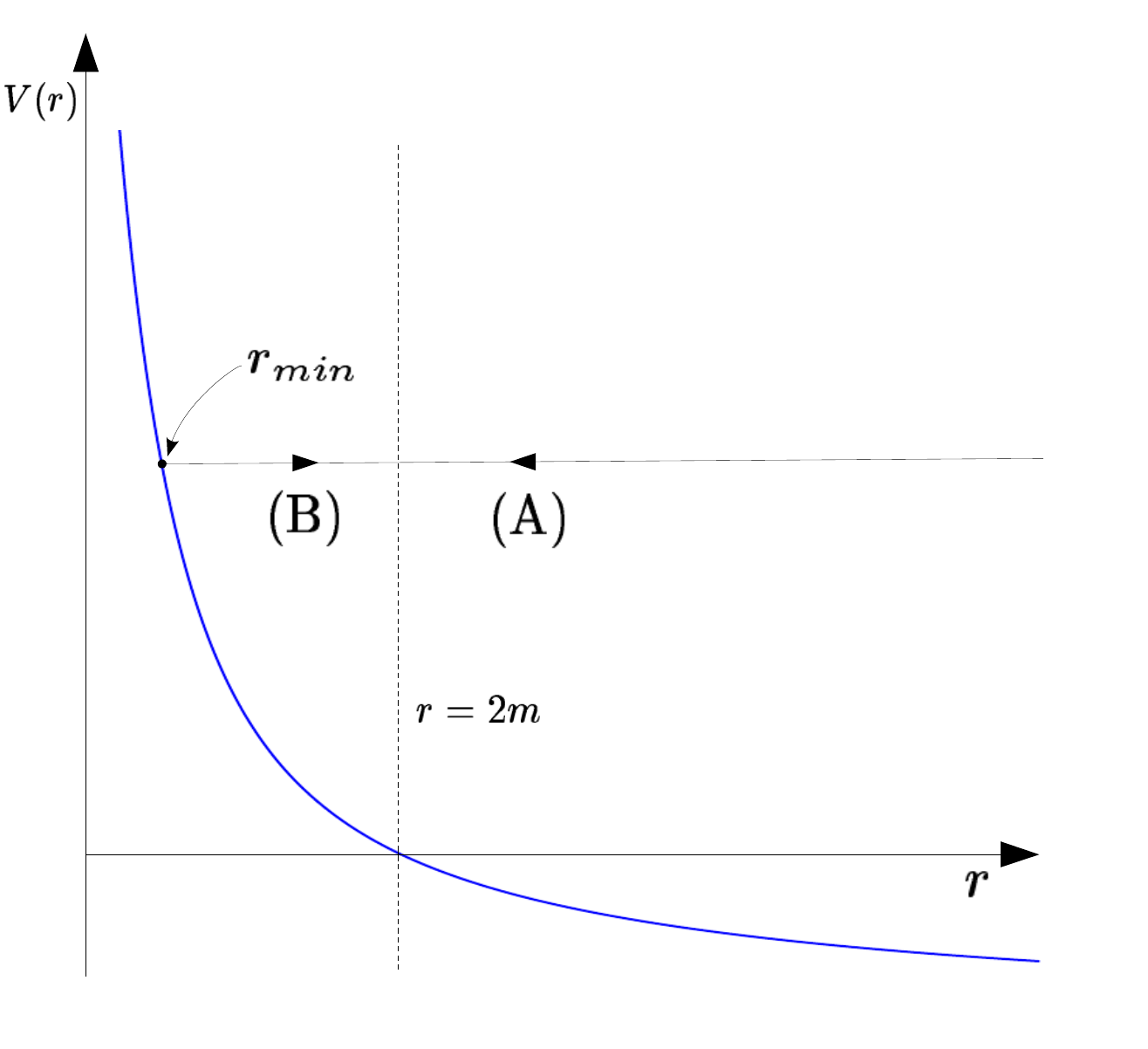}
 \caption{Qualitative form of the potential for causal geodesics.}
\label{fig-potential-causal}
\end{figure}

Therefore, a freely falling observer following a worldline that enters the time machine, will never reach the curvature singularity at $r=0$. Rather the geodesic is turning back and as the compactified $\nu$ direction is timelike in the region $R_2$, this poses the problem of possible self-intersections of extended objects entering the causality violating region on such a track. Similar problems occur in the Misner spacetime (see \cite{Misner1967} and \cite{HawEllis}) and were recently analyzed in \cite{Levanony}. 

So far we have excluded the special cases $\mathcal C=0$ and $V=0$. If we consider $V=0$, we find two sets of null geodesics corresponding to the radial null geodesics in the Schwarzschild spacetime. $V=0$ implies $k=\mathcal H=0$. Hence, $\theta = const$ and $\dot r=\pm \mathcal C$. The plus sign for $\dot r$ leads to the geodesics $s \mapsto (\nu_0, \mathcal Cs+r_0,\theta_0,0)$, i.e. linearly parametrized $r$-coordinate lines are lightlike geodesics, which cross the horizon and enter the time machine. In the case of the minus sign, the geodesic equations imply $\mathcal C =1$ and 
\begin{equation*}
r(s) = s +r_0 \qquad \dot \nu =-\frac{2}{1-\frac{2m}{r}}. 
\end{equation*}
Note that, as we approach the horizon $H$, $\dot \nu$ diverges. Hence the second set of null geodesics do not cross the horizon but end there, so they all are incomplete.\\
If we look at the case $\mathcal C=0$, we have $\dot r = \pm \sqrt{-V(r)}$. This is only possible for $r>2m$ and we deduce from \eqref{energy} that
\begin{equation*}
  \dot \nu = \mp \sqrt{\frac{\frac{\mathcal H^2}{r^2}-k}{1-\frac{2m}{r}}} \qquad \text{for} \quad r > 2m.
\end{equation*}
Thus, we see again that $\gamma$ ends at $H$ since $\nu$ diverges there. It is interesting to note that the $\mathcal{C}=0$ timelike geodesics approaching the horizon correspond to the $\bar{\nu}=\textrm{const}$ geodesics along the $\bar{r}$-coordinate lines in the coordinate representation (\ref{newmetric}). In the barred coordinates the vector field 
\[
 V=-\sqrt{1-\frac{2m}{\bar{r}}}\partial_{\bar{r}}
\] 
is geodesic and its integral curves are the $\bar{r}$-coordinate lines approaching the horizon. Doing the pseudo-Finkelstein transformation (\ref{newcoord}) for $V$ we arrive at the vector field 
\[
 \tilde{V}=\frac{1}{\sqrt{1-\frac{2m}{\bar{r}}}}\partial_{\nu}-\sqrt{1-\frac{2m}{r}}\partial_{r},
\] 
which obviously fulfills the geodesic equations (\ref{energy}), (\ref{momentum}) and (\ref{equ-potential-r}) for $\mathcal{C}=0$ and $\theta=\textrm{const}$. 

In analogy to the usual Schwarzschild black hole, we can consider freely falling observers along $\bar{r}$-coordinate lines as the most natural way of approaching the time machine, without an initial boost in the $\bar{\nu}$, $\theta$ and $\phi$ directions. But the analysis shows that these geodesics are always incomplete. Hence to enter the time machine on a geodesic one has to start with a specific initial boost in the $\bar{\nu}$ direction. In this case one avoids the fate of moving on an incomplete worldline that terminates before the horizon. Prop.~2 below shows that the geodesic nevertheless remains incomplete.

\subsubsection*{Spacelike Geodesics}
In the spacelike case the potential has a different form (see Fig. \ref{fig-potential-spacelike}), leading to different behaviour of $r$. 

\begin{figure}[h] \centering
\includegraphics[scale=0.45]{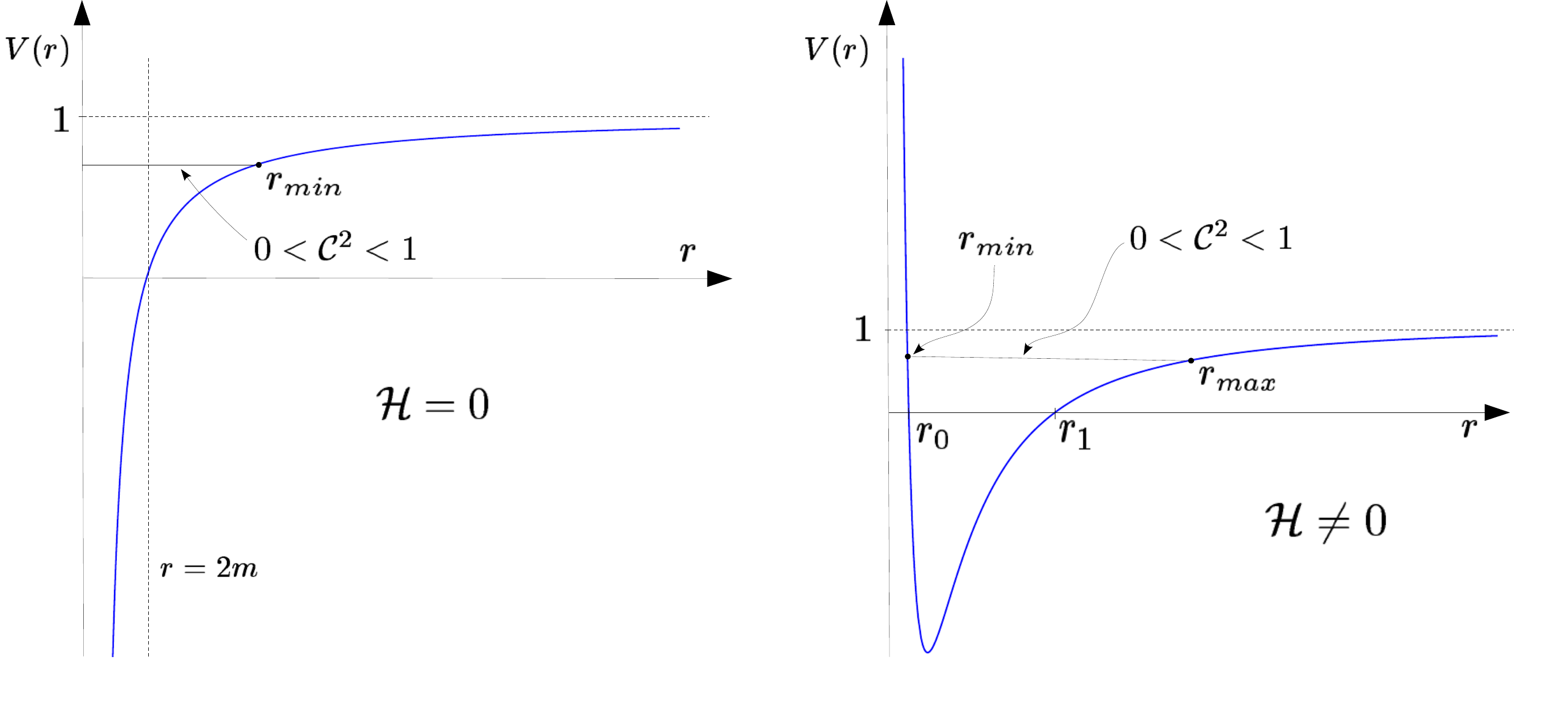}
\caption{Potentials for spacelike geodesics.}
\label{fig-potential-spacelike}
\end{figure} 

As before, let us suppose $\mathcal C \neq 0$. We have to distinguish:
\begin{enumerate}[(a)]
 \item $\mathcal C^2 \geq 1$: We see from the potential that for a geodesic $\gamma$ starting in $R_1$ with $\dot r(0) >0$,  the function $r$ will always be strictly increasing and is unbounded above. If $\dot r(s)<0$ initially, it depends on the sign of $\mathcal C$ whether it crosses the horizon $H$ (I) or ends there (II). This last statement is also true in the case of $\gamma$ starting in $R_2$ and $\dot r(s)>0$ initially. On the other hand, if $\gamma$ starts in $R_2$ and $\dot r(0)<0$ we clearly see that it ends at $r=0$ if $\mathcal H=0$ or reaches a turning point $r=r_{min}<2m$ if $\mathcal H \neq 0$. In the latter case, $\gamma$ is again incomplete because of the sign reversal of $\dot r$ at $r=r_{min}$.

 \item For $\mathcal C^2 <1$ and  $\mathcal H \neq 0$, $r$ would vary periodically between some minimum 	value $r_{min}$ and some maximum value $r_{max}$. As this happens, $\gamma$ has to cross the horizon $H$ since we have $r_{min} < 2m < r_{max}$. Depending on the sign of $\mathcal C$ and $\dot r$, it can happen that we are in case (II) above at the first intent of $\gamma$ to cross $H$. Then $\gamma$ ends there. If not, $\gamma$ crosses $H$, but it is easy to see that then $\gamma$ ends at its second intent to cross $H$ because $\mathcal C$ and $\dot r$ will then always have opposite signs. In both cases $\gamma$ is incomplete.

 \item If $\mathcal C^2 <1$ and $\mathcal H = 0$, a geodesic $\gamma$ with $\dot r(s)>0$ initially starting in $R_1$ reaches a turning point $r_{max}$ and then begins to approach $H$. Now again, as with a geodesic with $\dot r(0)<0$ starting in $R_1$ as well, it depends on the signs of $\mathcal C$ and $\dot r$ whether $\gamma$ crosses $H$ or ends there. If it does cross $H$, it clearly ends at $r=0$. Similar arguments apply to a geodesic $\gamma$ starting in $R_2$. Hence all these geodesics are incomplete.
\end{enumerate}

Let us now consider $\mathcal C=0$ for spacelike geodesics. Equations \eqref{energy} and \eqref{equ-nu-cross-horizon} become
\begin{equation} \label{equ-nu-energy-zero-horizon}
 \dot r ^2 = - V(r) \qquad \text{and} \qquad \dot \nu = -\frac{\dot r}{1-\frac{2m}{r}} = \frac{1-\frac{\mathcal H^2}{r^2}}{\dot r}.
\end{equation}
Consider first $\mathcal H=0$. From the form of $V$ we conclude that $\gamma$ has to start in $R_2$. If $\gamma$ satisfies $\dot r(0)>0$, the second equation in \eqref{equ-nu-energy-zero-horizon} shows that $\gamma$ ends at $H$. If $\dot r(0) <0$, $\gamma$ clearly ends at $r=0$ as in case (c) above.\\
For $\mathcal H \neq 0$ we can suppose $|\mathcal H| \neq 2m$ (for otherwise $V \geq 0$, hence $\dot r = 0$ and $r=2m$). The form of $V$ tells us that $r$ is bounded below by $r_0$ and above by $r_1$. As $\gamma$ approaches $H$, \eqref{equ-nu-energy-zero-horizon} shows that $\dot \nu$ diverges and $\gamma$ ends there. On the other hand, $r = |\mathcal H|$ is a turning point since $\dot \nu$ tends to zero in that case.\\

In our analysis so far we had to exclude the case that $r=2m$. If we assume so, we deduce from \eqref{equ-mcon} that
$\dot \theta =\frac{k}{2m}$ and $k \geq0$. All the geodesic equations are then satisfied, if
\begin{equation*} \label{equ-geodesic-nu-horizon}
 \ddot \nu - \frac{1}{4m} \dot \nu ^2 - \frac{k}{2m}=0.
\end{equation*}
Solving this equation, one comes to the conclusion that each point in $H$ lies on a null geodesic and that all these null geodesics as well as the spacelike geodesics resulting from $k=1$ are incomplete. Hence the horizon is generated by closed incomplete null geodesics.
\\

We can sum up the global properties of the pseudo Schwarzschild spacetime investigated here in the following proposition.

\begin{prop}
The pseudo Schwarzschild spacetime is analytically inextendible and all non-constant geodesics are incomplete, with the non-spacelike ones imprisoned incomplete. 
\end{prop}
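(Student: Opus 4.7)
My plan is to treat the two halves of the proposition separately, handling the geodesic statements first because the inextendibility argument will feed off them.

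For the incompleteness claim I would merely compile the detailed case-by-case analysis of the preceding subsections. In the causal case, the trichotomy $(\mathcal C\neq 0,\,V\neq 0)$, $\mathcal C=0$, $V=0$, together with Fig.~\ref{fig-potential-causal}, shows that every causal geodesic either terminates at $H$ in Case II, or crosses $H$, reflects at some $r_{\min}>0$, and then falls back into Case II at its second encounter with the horizon; the closed null generators on $H$ were solved explicitly above and are incomplete. For spacelike geodesics, the subcases (a), (b), (c) with $\mathcal C\neq 0$, the $\mathcal C=0$ discussion, and the $r\equiv 2m$ case exhaust every possibility and each one produces an incomplete geodesic, so a short tabulation is all that is needed. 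The imprisonment claim then follows uniformly for the non-spacelike ones: the potential confines $r$ to a compact interval $[r_{\min},r_{\max}]\subset(0,\infty)$ containing $2m$, so the momentum integral $r^2\dot\theta=\mathcal H$ bounds $|\dot\theta|$; because the affine parameter interval $(a,b)$ is finite, the total variation of $\theta$ is bounded, and since $\nu$ and $\phi$ live on compact circles the image of the geodesic lies in the compact set $[r_{\min},r_{\max}]\times S^1\times[\theta_-,\theta_+]\times S^1\subset M$.

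For the analytic inextendibility I would invoke Prop.~6.20 of \cite{BeemEhrlich}, which reduces inextendibility to a manifold-theoretic obstruction along an imprisoned incomplete geodesic. For the cleanest argument I would plug in the ``second'' family of radial null geodesics ($V=0$, $\dot r=-\mathcal C$), which lie entirely inside an embedded cylinder $Z_{\theta_0,\phi_0}=\{\theta=\theta_0,\,\phi=\phi_0\}$. These end at the compact circle $H\cap Z=\{r=2m\}\subset Z$, but because $\dot\nu\to+\infty$ non-integrably the $\nu$-coordinate itself is unbounded as $s\to b$, so the geodesic winds around that circle infinitely often and accumulates on the entire circle rather than at a single point. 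Any analytic extension $(\widetilde M,\widetilde g)\supset(M,g)$ would therefore have to attach more than one manifold point to realise the terminus of this geodesic, forcing a non-Hausdorff fibre over $r=2m$ and contradicting Prop.~6.20.

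The main obstacle will be the rigorous translation of ``accumulates on a compact circle'' into the hypothesis of Prop.~6.20 within the analytic Lorentzian category. I must verify that no covering-space construction or quotient can evade the Hausdorff obstruction, that Prop.~6.20 applies cleanly to the two-dimensional cylindrical reduction I am using (and not just to the full four-dimensional manifold), and that the conclusion on the single cylinder $Z_{\theta_0,\phi_0}$ propagates to the full spacetime via the $\partial_\nu$ Killing field and the isotropy of $H^2$. The geodesic bookkeeping of the first paragraph is routine by comparison.
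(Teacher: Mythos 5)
Your proposal follows essentially the same route as the paper: the incompleteness claim is obtained by compiling the case analysis of the preceding subsections, and inextendibility is derived from Prop.~6.20 of Beem--Ehrlich applied to an incomplete geodesic on an embedded $\theta,\phi=\mathrm{const}$ cylinder whose $\nu$-coordinate winds infinitely around the compact circle $H\cap Z$, so that attaching an endpoint would destroy the manifold (Hausdorff) structure. The only caveat is that the first family of radial null geodesics ($V=0$, $\dot r=+\mathcal C$) terminates at the curvature singularity $r=0$ rather than being confined to a compact $r$-interval, so the ``uniform'' imprisonment argument needs that family excluded --- an imprecision your write-up shares with the paper's own statement.
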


\section*{Pseudo Kerr Spacetime}
Just like the Schwarzschild spacetime is a special case of the Kerr spacetime, we can construct a pseudo Kerr spacetime containing an additional parameter $a$ which contains the pseudo Schwarzschild spacetime in the special case $a=0$.\\

To do so, we consider the manifold $M \pdef S^1 \times \R^3$ with metric
\begin{equation*} \label{equ-metric-Pseudo-kerr}
 g \pdef g_{\nu\nu} d\nu^2 -  2 d\nu dr + 2 g_{\nu \phi} d\nu d\phi - 2a \sinh ^2 (\theta) dr d\phi
 + \rho^2 d\theta^2 +g_{\phi \phi} d\phi^2
\end{equation*}
and
\begin{align*}
 g_{\nu \nu} 	& \pdef 1 - \frac{2mr}{\rho^2},	\\
 g_{\nu \phi} 	& \pdef - \frac{2mar}{\rho^2} \sinh^2(\theta),	\\
 g_{\phi \phi} 	& \pdef \left( r^2 + a^2 - \frac{2ma^2r}{\rho^2} \sinh^2(\theta) \right) \sinh^2(\theta),
\end{align*}
where
\begin{equation*}
 \rho^2	 \pdef r^2 + a^2 \cosh^2(\theta).
\end{equation*}
As in the pseudo Schwarzschild case, we think of $\theta$ and $\phi$ as pseudo spherical
polar coordinates, $\nu \in S^1$ is a circular coordinate but here $r \in \R$.
We further suppose $a,m >0$ and $a<m$, corresponding to slow Kerr.\\
If $o$ denotes the origin in $\R^2$, the set $S \pdef S^1 \times \R \times \{o\}$ is the part of $M$ 
that is not covered by the coordinates $(\nu,r,\theta,\phi)$. However, a little inspection shows that
we can extend the metric over $S$ where it restricts to
\begin{equation} \label{equ-metric-Pseudo-kerr-on-S}
 g|_{S} = \left( 1 - \frac{2mr}{r^2+a^2} \right) d\nu^2 -  2 d\nu dr + (r^2 + a^2) g_{H^2}|_{o}.
\end{equation}
The metric of the hyperbolic plane $H^2$ in the last summand is again given by
\begin{equation*}
 g_{H^2} = d\theta^2 + \sinh^2(\theta) d\phi^2.
\end{equation*}
The Kerr function is
\begin{equation*}
 \Delta(r) \pdef r^2 - 2mr + a^2
\end{equation*}
and with our choice for $a$ it has two zeros $ 0 < r_0 < r_1 < 2m $ given by
\begin{equation*}
 r_i \pdef m + (-1)^i \sqrt{m^2 - a^2}.
\end{equation*}
This separates $M$ into the regions $\mathrm{R_I}$, $\mathrm{R_{II}}$, $\mathrm{R_{III}}$ 
for which $r_1<r$, $r_0<r<r_1$ and $r<r_0$, respectively. Furthermore, we have two horizons $H_i$
characterized by $r=r_i$ with their union designated by $H$.\\
It will be useful for many calculations to collect here the following identities.
\begin{prop} \label{prop-metric-identities}
 For the coefficient functions of the pseudo Kerr metric the following identities hold:
\begin{enumerate}[(1)]
 \item $\quad ag_{\phi \phi} + (r^2 + a^2) g_{\nu \phi} = a \Delta \sinh^2(\theta)$
 \item $\quad ag_{\nu \phi} + (r^2 + a^2) g_{\nu \nu} = \Delta$
 \item $\quad g_{\nu \nu} g_{\phi \phi} - g_{\nu \phi}^2 = \Delta \sinh^2(\theta)$ \label{equ-identity-for-determinant}
\end{enumerate}
\end{prop}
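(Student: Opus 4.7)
The plan is to verify each of the three identities by direct substitution of the metric coefficients $g_{\nu\nu}$, $g_{\nu\phi}$, $g_{\phi\phi}$ into the left-hand side and simplifying. The single algebraic fact that makes every computation collapse is the reformulation
\[
 \rho^2 = r^2 + a^2\cosh^2(\theta) = (r^2+a^2) + a^2\sinh^2(\theta),
\]
obtained from $\cosh^2(\theta)-\sinh^2(\theta)=1$. Whenever the combination $a^2\sinh^2(\theta) + (r^2+a^2)$ appears as a factor of $\tfrac{2mr}{\rho^2}$, it cancels the $\rho^2$ in the denominator and leaves a clean polynomial in $r$.

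For (1) I would pull the common factor $a\sinh^2(\theta)$ out of $ag_{\phi\phi} + (r^2+a^2)g_{\nu\phi}$ and then merge the two $\tfrac{2mr}{\rho^2}$-contributions. The bracketed expression reduces to $(r^2+a^2) - \tfrac{2mr}{\rho^2}\bigl[a^2\sinh^2(\theta) + (r^2+a^2)\bigr] = (r^2+a^2) - 2mr = \Delta$, whence (1) follows. For (2) the same mechanism applies verbatim: after substitution, the two $\tfrac{2mr}{\rho^2}$-terms combine to $\tfrac{2mr}{\rho^2}\cdot\rho^2 = 2mr$, leaving $(r^2+a^2)-2mr = \Delta$.

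Identity (3) is the only one that is genuinely quadratic in the metric coefficients and therefore the step I expect to need the most care. Expanding $g_{\nu\nu}g_{\phi\phi}$ produces four summands; the product of the two $\tfrac{2mr}{\rho^2}$-pieces contributes $\tfrac{4m^2a^2r^2}{\rho^4}\sinh^4(\theta)$, and the key observation is that this cross term is exactly $g_{\nu\phi}^2$, so the subtraction annihilates it. What remains is $\sinh^2(\theta)$ multiplied by the same bracket as in~(1), which the $\rho^2$-cancellation trick once again collapses to $\Delta\sinh^2(\theta)$. The obstacle is thus not conceptual but purely bookkeeping — tracking the four summands in the expansion carefully so that the cancellation with $g_{\nu\phi}^2$ is manifest and the remaining factor is recognizably of the form handled in (1)--(2).
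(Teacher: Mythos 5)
Your verification is correct, and the single identity $\rho^2=(r^2+a^2)+a^2\sinh^2(\theta)$ is indeed the only fact needed; all three computations, including the cancellation of the cross term against $g_{\nu\phi}^2$ in (3), check out. The paper states this proposition without proof, treating it as a routine substitution, so your direct calculation is precisely the omitted argument.
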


We introduce a new coordinate system $(\bar \nu, \bar r, \bar \theta, \bar \phi)$ on the subset $M \setminus (H \cup S)$
which will serve better for some purposes.
It is implicitly defined by the map
\begin{equation*}
(\nu, r, \theta, \phi)  \mapsto (\bar \nu, \bar r, \bar \theta, \bar \phi) = (\nu - T(r), r, \theta, \phi - A(r)),
\label{equ-new-coord-system}
\end{equation*}
where $T$ and $A$ are functions of $r$ satisfying
\begin{equation*}
 \frac{dT}{dr} = \frac{r^2 + a^2}{\Delta} \qquad \text{and} \qquad \frac{dA}{dr} = \frac{a}{\Delta}.
\end{equation*}
%

This transformation has the advantage that it reduces the number of off-diagonal terms in 
the coordinate expression of the metric tensor by one, leaving only $g_{\bar \nu \bar \phi}$ non-vanishing. 
After some calculation, one obtains as non-zero components
\begin{align*} 
 g_{\bar \nu \bar \nu} 	&= 1-\frac{2mr}{\rho^2},		\\
 g_{\bar \nu \bar \phi}	&= -\frac{2mar}{\rho^2} \sinh^2(\theta),	\\
 g_{\bar r \bar r}		&= -\frac{\rho^2}{\Delta},	\\
 g_{\bar \theta \bar \theta} &= \rho^2,			\\
 g_{\bar \phi \bar \phi}	&= \left(r^2 + a^2 - \frac{2ma^2r}{\rho^2}\sinh^2(\theta) \right) \sinh^2(\theta).
\end{align*}
It is in these coordinates that $g$ resembles most the usual Kerr metric in Boyer-Lindquist form.
The difference in the topology of the underlying manifolds of the Kerr and pseudo Kerr metric remains, however.\\
Similar to the situation in the old coordinates above, the expression for the metric tensor shows that
on $S \setminus H$ we can write $g$ as
\begin{equation} \label{equ-metric-Pseudo-kerr-on-S-overbar-coord}
 g|_{S \setminus H} = \left( 1- \frac{2mr}{r^2+a^2} \right) d \bar \nu  \otimes d \bar \nu - \frac{\rho^2}{\Delta} dr \otimes dr + \rho^2 g_{H^2}|_o.
\end{equation}
Note that we have used $d\bar r = dr$ in this expression.\\
Due to the form of the metric components and by analogy to the Kerr and pseudo Schwarzschild metric, 
we time orient the region $\mathrm{R_I}$ by requiring $-\partial_{\bar r}$ to be future pointing on $\mathrm {R_I}$.
This leads to the fact that a causal curve $\alpha \colon I \to M$ is future pointing in $\mathrm{R_I}$ if and only if $\dot{\bar r} = \dot r <0$ for $\alpha$. To see this, we only need to look at 
\begin{equation*}
 \langle \alpha', -\partial_{\bar r} \rangle = \frac{\rho^2}{\Delta} \dot{\bar r} < 0 \qquad \Leftrightarrow \qquad \alpha \text{ future  pointing in } \mathrm{R_I}
\end{equation*}
and the fact that $\Delta > 0$ in $\mathrm{R_I}$.\\
The usual Kerr metric has the interesting feature of admitting a Killing tensor which is used in analysing geodesics. This property is still present in the pseudo Kerr metric. Using the overbar coordinates, we define in component notation
\begin{equation*}
 l^{\mu} \pdef \left(\frac{r^2+a^2}{\Delta}, 1,0,\frac{a}{\Delta} \right) \qquad \text{and} \qquad
 n^{\mu} \pdef \frac{1}{2\rho^2}\left(r^2+a^2,-\Delta,0,a\right).
\end{equation*}
It can then be verified that the symmetric $(2,0)$-tensor
\begin{equation*}
 \kappa_{\mu \nu} \pdef 2\rho^2 l_{(\mu} n_{\nu)} - r^2 g_{\mu \nu}
\end{equation*}
is a Killing tensor for the pseudo Kerr metric. The difference to the corresponding Killing
tensor of the Kerr metric resides in the minus sign before the second summand in the definition.\\
Note also, that in the pseudo Kerr spacetime there is no longer any analogue to the ring singularity of the ususal Kerr spacetime.

\subsection*{CTCs in pseudo Kerr spacetime}
\begin{prop}
 There exists a closed timelike curve through the point $p \in M$ if and only if $ p \in \mathrm{R_{II}}$.
\end{prop}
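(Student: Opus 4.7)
My plan is to establish the two implications separately. For $p \in \mathrm{R_{II}} \Rightarrow$ CTC through $p$, I would exhibit explicit closed timelike curves in the $2$-plane spanned by the circular coordinate fields $\partial_\nu$ and $\partial_\phi$. For the converse, I would use $r$ as a local time function on $\mathrm{R_I}$ and $\mathrm{R_{III}}$ to run a max/min argument that precludes CTCs through any point outside $\mathrm{R_{II}}$.

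For existence, fix $p \in \mathrm{R_{II}}$ with $\theta(p)>0$. Along any curve with $r,\theta$ held fixed, the metric restricts to the quadratic form $g_{\nu\nu}\,d\nu^2 + 2 g_{\nu\phi}\,d\nu\,d\phi + g_{\phi\phi}\,d\phi^2$, whose determinant equals $g_{\nu\nu} g_{\phi\phi} - g_{\nu\phi}^2 = \Delta\sinh^2(\theta)$ by identity \ref{equ-identity-for-determinant} of Proposition~\ref{prop-metric-identities}; this is strictly negative throughout $\mathrm{R_{II}}$, so the form is Lorentzian and contains an open cone of timelike directions. Since $g_{\nu\nu},g_{\nu\phi},g_{\phi\phi}$ are independent of $\nu$ and $\phi$, and both $\nu$ and $\phi$ are circular of period $2\pi$, any integral curve $s \mapsto (\nu_0 + \alpha s, r_0, \theta_0, \phi_0 + \beta s)$ of $\alpha\partial_\nu + \beta\partial_\phi$ with $\alpha/\beta \in \mathbb{Q}$ (or $\beta=0$) is closed, and it is timelike throughout as soon as $\alpha\partial_\nu + \beta\partial_\phi$ is timelike at $p$. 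Density of rational slopes inside the open timelike cone thus delivers a CTC through $p$. For $p \in \mathrm{R_{II}} \cap S$, where $\theta = 0$ and this plane degenerates, one instead uses \eqref{equ-metric-Pseudo-kerr-on-S-overbar-coord}: since $g_{\bar\nu\bar\nu}|_S = \Delta/(r^2+a^2) < 0$ throughout $\mathrm{R_{II}} \cap S$, the closed $\partial_{\bar\nu}$-orbit is itself a CTC.

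For non-existence, suppose $\gamma : S^1 \to M$ is a future-pointing closed timelike curve with $\gamma(0) = p \notin \mathrm{R_{II}}$. Inverting the metric (using identity \ref{equ-identity-for-determinant} again) gives $g^{rr} = -\Delta/\rho^2$, so $g(\nabla r, \nabla r) = -\Delta/\rho^2$ is negative on $\mathrm{R_I} \cup \mathrm{R_{III}}$ and zero on $H$; converting $\nabla r = -(\Delta/\rho^2)\partial_{\bar r}$ back to the original coordinates shows $\nabla r|_H = -\tfrac{r^2+a^2}{\rho^2}\partial_\nu - \tfrac{a}{\rho^2}\partial_\phi \neq 0$, so $\nabla r$ is a nowhere-vanishing causal vector field on $\overline{\mathrm{R_I}} = \mathrm{R_I} \cup H_1$ and on $\overline{\mathrm{R_{III}}} = \mathrm{R_{III}} \cup H_0$. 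Since a timelike vector has nonzero inner product with any nonzero causal vector, the function $f \pdef r \circ \gamma$ satisfies $f'(s) = g(\dot\gamma(s), \nabla r|_{\gamma(s)}) \neq 0$ whenever $\gamma(s) \in \overline{\mathrm{R_I}} \cup \overline{\mathrm{R_{III}}}$. If $p \in \overline{\mathrm{R_I}}$, then $\max f \geq r(p) \geq r_1$, so this maximum is attained at some $s^*$ with $\gamma(s^*) \in \overline{\mathrm{R_I}}$, forcing $f'(s^*) = 0$ --- a contradiction. The case $p \in \overline{\mathrm{R_{III}}}$ is dispatched identically via $\min f$.

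The main obstacle is the reverse direction. The immediate remark that the hypersurfaces $r = \mathrm{const}$ are spacelike in $\mathrm{R_I}$ only rules out CTCs contained in $\mathrm{R_I}$ and does not preclude CTCs that oscillate across $H_1$ between $\mathrm{R_I}$ and $\mathrm{R_{II}}$. The max/min argument dispatches this once one knows that $\nabla r$ remains nonvanishing (merely becoming null) on $H$, a fact obscured by the overbar-coordinate expression for $\nabla r$ and only visible after reverting to the original coordinates.
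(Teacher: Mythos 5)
Your proof is correct and follows the same overall strategy as the paper: explicit closed curves of rational slope on the $(\nu,\phi)$-torus for existence (using identity (3) of Proposition \ref{prop-metric-identities} to see that the restricted $2\times 2$ form is Lorentzian), the separate treatment of $\mathrm{R_{II}}\cap S$ via \eqref{equ-metric-Pseudo-kerr-on-S-overbar-coord}, and the causal character of the level sets of $r$ for non-existence. Two points of comparison are worth recording. First, in the existence direction the paper splits into three cases ($g_{\nu\nu}<0$, $g_{\phi\phi}<0$, and both $\geq 0$) and only invokes the quadratic $g_{\phi\phi}+2g_{\nu\phi}u+g_{\nu\nu}u^2$ in the last case; your observation that $g_{\nu\nu}g_{\phi\phi}-g_{\nu\phi}^2=\Delta\sinh^2\theta<0$ forces signature $(1,1)$ on the whole coordinate torus subsumes all three cases at once, which is a genuine streamlining. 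Second, and more substantively, your max/min argument via $f=r\circ\gamma$ is a more complete version of the paper's one-line remark that a CTC reaching $\mathrm{R_I}$ or $\mathrm{R_{III}}$ "would have to be tangent to some hypersurface $r=\mathrm{const}$": the paper only asserts that these hypersurfaces are spacelike where $\Delta>0$, which does not by itself exclude a CTC through a point of $H$ itself (where the level set is null). Your verification that $\nabla r$ stays nonvanishing (and merely becomes null) on $H$, so that a timelike tangent still cannot annihilate $dr$ there, closes that case explicitly; the paper leaves it implicit. Note only that you slightly undersell the paper's remark in your closing paragraph: combined with the extremum argument it does exclude CTCs that merely touch $\mathrm{R_I}$, since the $r$-maximum of such a curve would lie where $\Delta>0$; the genuinely missing case in the paper is $p\in H$, which is exactly the one your causal-gradient computation handles.
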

\begin{proof}
 The causal character of the hypersurfaces $r = const$ is determined by
\begin{equation*}
 g^{rr} = - \frac{\Delta(r)}{\rho^2}.
\end{equation*}
Hence they are spacelike if $\Delta(r) > 0$ making it impossible for any CTC to reach any of the regions $R_{\mathrm I}$ or $R_{\mathrm{II}}$. For otherwise such a CTC would have to be tangent to some hypersuface $r=const$.\\
Next, we consider points $p \in \mathrm{R_{II}} \setminus S$ which allows us to use the coordinates $(\nu , r, \theta, \phi)$. If $r$ and $\theta$ are such, that
\begin{equation*}
 g_{\nu \nu} < 0 \qquad \Leftrightarrow \qquad \Delta(r) + a^2 \sinh^2(\theta) <0,
\end{equation*}
which is the case for $\theta$ small, then the closed curves $s \mapsto (s,r,\theta,\phi)$ are CTCs for any $\phi$.
Similarly, if $r$ and $\theta$ satisfy the condition
\begin{equation*}
 g_{\phi \phi} <0 \qquad \Leftrightarrow \qquad \Delta(r) + 2mr \left( 1- \frac{a^2 \sinh^2(\theta)}{r^2 + a^2 \cosh^2(\theta)} \right) < 0,
\end{equation*}
which holds for $\theta$ large enough, the closed curves $s \mapsto (\nu, r, \theta, s)$ are CTCs for any $\nu$.
Consider now the case of $g_{\nu \nu},g_{\phi \phi} \geq 0$. The curve 
\begin{equation} \label{equ-CTC-positive-gnunu-gphiphi}
 I \to M, \qquad s \mapsto (us, r, \theta, s),
\end{equation}
for an appropriate interval $I$ and a constant $u \neq 0$, is timelike iff
\begin{equation*}
 g_{\phi \phi} + 2g_{\nu \phi} u + g_{\nu \nu} u^2 < 0.
\end{equation*}
This equation has zeros
\begin{equation*}
 c_\pm \pdef g_{\nu \nu}^{-1} ( -g_{\nu \phi} \pm \sqrt{-\Delta(r)}\sinh \theta), \qquad \qquad g_{\nu \nu} \neq 0.
\end{equation*}
Hence any rational $u$ satisfying
\begin{equation*}
\begin{cases}
 u \in (c_- , c_+ ) \qquad &\text{if} \quad g_{\nu \nu} \neq 0, \\
 u > -\frac{g_{\phi \phi}}{2g_{\nu \phi}} & \text{if} \quad g_{\nu \nu} = 0
\end{cases}
\end{equation*}
converts \eqref{equ-CTC-positive-gnunu-gphiphi} into a CTC.\\ 
Summarizing, we have shown, that given any values of $r$ and $\theta$, there exists a CTC 
through the point $(0, r, \theta, 0)$. But the maps $(\nu,r,\theta,\phi) \mapsto (\nu + \nu_0, r, \theta, \phi)$ 
and $(\nu,r,\theta,\phi) \mapsto (\nu, r, \theta, \phi + \phi_0)$ are isometries of $M$. 
Hence any $p \in \mathrm{R_{II}} \setminus S$ lies on some CTC.	\\ 
Suppose now, that $p \in \mathrm{R_{II}} \cap S$. On $S$, the metric tensor has the form \eqref{equ-metric-Pseudo-kerr-on-S} and since $\Delta(p) < 0$, we have $g_{\nu \nu}(p) < 0$. Therefore the curve $\alpha \colon [-\pi,\pi] \to M$ determined by
\begin{equation*}
 \nu(\alpha(s)) = s \qquad \text{and} \qquad \Pr(\alpha(s)) = \Pr(p) \qquad \text{for all } s,
\end{equation*}
where $\Pr \colon M \to \R^3$ is the projection of $M$ onto its topological factor $\R^3$, is a CTC through $p$.
\end{proof}

\subsection*{Time Machine Structure}
We begin by showing that $H_1$ is generated by null curves.
\begin{prop}\label{prop-H1-generated-by-null-curves}
 Any $p \in H_1$ lies on a null curve entirely contained in $H_1$.
\end{prop}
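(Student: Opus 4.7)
My plan is to exhibit, for each $p\in H_1$, an explicit null curve through $p$ lying in the level set $\{r=r_1\}$, which is precisely $H_1$. Keeping $r$ constant along the candidate curve automatically guarantees it stays in $H_1$, so the entire task reduces to producing a curve on this level set whose tangent is null.

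For $p\in H_1\setminus S$, where the coordinates $(\nu,r,\theta,\phi)$ are available, I would try the ansatz
\begin{equation*}
\alpha(s)=\bigl(\nu(p)+bs,\;r_1,\;\theta(p),\;\phi(p)+cs\bigr)
\end{equation*}
with real constants $b,c$ to be determined. Since $r$ and $\theta$ are constant along $\alpha$, all relevant metric components are constant in $s$, and the null condition on $\dot\alpha=b\,\partial_\nu+c\,\partial_\phi$ reduces to the single algebraic equation
\begin{equation*}
g_{\nu\nu}\,b^2+2g_{\nu\phi}\,bc+g_{\phi\phi}\,c^2=0
\end{equation*}
evaluated at $p$. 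By part (3) of Proposition \ref{prop-metric-identities}, on $H_1$ one has $g_{\nu\nu}g_{\phi\phi}-g_{\nu\phi}^2=\Delta(r_1)\sinh^2(\theta)=0$, so this quadratic form is degenerate and admits a nontrivial real null direction. Parts (1) and (2) of the same proposition, both of which force zero right-hand sides on $H_1$, then pin that direction down: $b=r_1^2+a^2$ and $c=a$ is a solution (as a short substitution using the two vanishing relations shows). This yields the desired null curve through any $p\in H_1\setminus S$.

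The remaining case is $p\in H_1\cap S$, where the $(\nu,r,\theta,\phi)$-chart breaks down. Here I would work directly from the restricted expression \eqref{equ-metric-Pseudo-kerr-on-S} for $g|_S$ and consider the curve $s\mapsto(\nu(p)+s,\,r_1,\,o)$ in the $S^1\times\mathbb{R}\times\{o\}$ chart of $S$. Its tangent is $\partial_\nu$, and
\begin{equation*}
g(\partial_\nu,\partial_\nu)\big|_{S\cap H_1}=1-\frac{2mr_1}{r_1^2+a^2}=\frac{\Delta(r_1)}{r_1^2+a^2}=0,
\end{equation*}
so the curve is null and lies entirely inside $H_1\cap S\subset H_1$.

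I do not expect any serious obstacle beyond the bookkeeping of the two coordinate patches: the identities of Proposition \ref{prop-metric-identities} simultaneously degenerate the relevant quadratic form on $H_1$ and single out its null direction, while the $S$ chart handles the axial case by direct inspection of $g|_S$. As a by-product, since $\nu$ and $\phi$ are both circular coordinates, both curves produced are in fact closed, so the proposition can be strengthened to say that $H_1$ is generated by closed null curves.
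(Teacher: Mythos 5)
Your argument is correct and is essentially the paper's own proof: off the axis you take a constant-velocity $\nu$--$\phi$ curve at $r=r_1$ whose direction $(b,c)=(r_1^2+a^2,a)$ is just $a$ times the paper's $(\eta,1)$ with $\eta=2mr_1/a$, and on $H_1\cap S$ you take the $\partial_\nu$-line, with nullity verified in both cases through the identities of Proposition \ref{prop-metric-identities} (the paper instead notes that identity (3) makes $\eta=-g_{\nu\phi}/g_{\nu\nu}$ the unique root of the quadratic, which is the same computation repackaged). The one flaw is your closing aside: the off-axis curves wind around the $(\nu,\phi)$-torus with slope $\eta=2mr_1/a$, which is not rational in general, so those curves need not be closed; only the curve on $H_1\cap S$ (the one the paper actually invokes later to establish TM-structure) is automatically closed, since along it only the circular coordinate $\nu$ varies.
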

\begin{proof}
Consider the null curves $\alpha \colon \R \to M$ starting at $p$ that are given by
\begin{equation} \label{equ-closed-null-curve-H1}
\Pr(\alpha(s)) = (r_1,o), \quad \nu(\alpha(s))=s + \nu(p) \quad \text{for all} \quad s \in \R, \; p \in H_1 \cap S
\end{equation}
or
\begin{equation*}\label{equ-closed-null-curves-H1}
 s \mapsto (\eta s + \nu(p), r_1, \theta(p), s + \phi(p)), \qquad \text{for} \qquad p \in H_1 \setminus S,
\end{equation*}
where
\begin{equation*}
 \eta \pdef \frac{2mr_1}{a} = \frac{2m}{a} \left( m + \sqrt{m^2 - a^2} \right)
\end{equation*}
and $\Pr$, as before, is the projection of $M$ onto $\R^3$. None of these curves leaves $H_1$ and the first one is a null curve since $g_{\nu \nu} =0$ on $H_1 \cap S$. The second set of curves consists of null curves because the only solution of 
\begin{equation*}
 g_{\phi \phi} + 2g_{\nu \phi}u + g_{\nu \nu} u^2 = 0, \qquad u \in \R ,
\end{equation*}
is given by $\eta = - \frac{g_{\nu \phi}}{g_{\nu \nu}}$ (we have $g_{\nu \nu} \neq 0$ on $H_1 \setminus S$).
\end{proof}
From now on, all calculations and arguments will be done using the coordinate system 
$(\bar \nu, \bar r, \bar \theta, \bar \phi)$ on $\mathrm{R_I}$. Since there will be no further exceptions, 
we shall drop the overbars in denoting these coordinate functions to render the formulas more accessible. 
Let us fix some $r_s > r_1$ that determines the spacelike hypersurface 
$\mathscr S \pdef \{ p \in M \colon r(p) = r_s\}$ and consider the subset
\begin{equation*}
 \mathscr C \pdef \{ p \in M \colon r(p) = r_s, \; \theta(p) \leq \theta_0 \} \subset \mathscr S \subset \mathrm{R_I}.
\end{equation*}
The constant $\theta_0 >0$ will be subject to constraints, such that the future domain of dependence 
$D^+(\mathscr C)$ contains points on the horizon $H_1$ in its closure (hence in its boundary). 

For a description of $D^+(\mathscr C)$ we need the functions
\begin{equation*}
 b_c \colon [r_1,r_s] \to \R, \qquad r \mapsto \ln(r-m + \sqrt{\Delta(r)}) + c
\end{equation*}
with $c \in \R$ being a constant. They all satisfy
\begin{equation} \label{equ-diffequ-boundary-functions}
 b_c'(r)=\frac{1}{\sqrt{\Delta(r)}}.
\end{equation}
The function $b_{\hat c}$, where $\hat c$ is such, that $b_{\hat c}(r_s) = \theta_0$, 
is called the boundary function. We use it to define the boundary curve
\begin{equation*}
 \beta \colon [r_1,r_s] \to M, \qquad s \mapsto (0,s,b_{\hat c}(s),0).
\end{equation*}
Using this terminology we have:
\begin{prop} \label{prop-future-domain-dependence-Pseudo-Kerr}
 The future domain of dependence of $\mathscr C$ is
\begin{equation} \label{equ-future-domain-of-dep-of-C}
 D^+(\mathscr C) = \{ p \in M \colon r(p) \in (r_1, r_s], \; \theta(p) \leq b_{\hat c}(r(p)) \}.
\end{equation}
\end{prop}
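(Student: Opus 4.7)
The geometric heart of the proof is a slope bound in the $(r,\theta)$-plane. In $\mathrm{R_I}$ one checks, using $\rho^2=r^2+a^2\cosh^2\theta$, that $g_{\nu\nu}=(\Delta+a^2\sinh^2\theta)/\rho^2$ and $g_{\phi\phi}=\sinh^2\theta\,((r^2+a^2)^2+a^2\sinh^2\theta\,\Delta)/\rho^2$ are strictly positive for $\theta>0$, while identity (\ref{equ-identity-for-determinant}) of Proposition \ref{prop-metric-identities} gives $g_{\nu\nu}g_{\phi\phi}-g_{\nu\phi}^2=\Delta\sinh^2\theta>0$. Hence the $(\nu,\phi)$-block of the metric in the overbarred coordinates is positive definite on $\mathrm{R_I}$, and the causality inequality $g(\alpha',\alpha')\leq 0$ collapses to
\[
\sqrt{\Delta(r)}\,|\dot\theta|\;\leq\;|\dot r|,
\]
with equality realized by the null curves having $\dot\nu=\dot\phi=0$. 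In view of (\ref{equ-diffequ-boundary-functions}), these saturating null curves are exactly the graphs $\theta=b_c(r)$, so the boundary functions are precisely the null slopes and the proposition reduces to integrating the slope bound.

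\textbf{Inclusion $\supseteq$.} Let $p$ lie in the right-hand side and let $\gamma$ be a past-inextendible causal curve through $p$, parametrised so that $s$ increases into the past. The time-orientation argument preceding the section on CTCs forces $\dot r>0$ throughout $\gamma\cap \mathrm{R_I}$, so $r$ is strictly monotone past-wards. I would first establish that $r\to\infty$ past-wards: were $r$ bounded, the slope bound would pin $\theta$ into a bounded range (integrating $1/\sqrt{\Delta}$ over a finite $r$-interval), while $\nu,\phi\in S^1$ are automatically bounded, so $\gamma$ would be imprisoned in a compact subset of $\mathrm{R_I}$; this contradicts strong causality of $\mathrm{R_I}$, which follows at once from $r$ being a time function there. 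Thus $\gamma$ crosses $\mathscr S$, and reparametrising by $r$ and integrating the slope bound gives
\[
\theta(r_s)\;\leq\;\theta(r(p))+\int_{r(p)}^{r_s}\frac{dr}{\sqrt{\Delta(r)}}\;\leq\;b_{\hat c}(r(p))+\bigl(b_{\hat c}(r_s)-b_{\hat c}(r(p))\bigr)\;=\;\theta_0,
\]
so the crossing point lies in $\mathscr C$.

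\textbf{Inclusion $\subseteq$.} For $p$ outside the right-hand set I construct a past-inextendible causal curve through $p$ avoiding $\mathscr C$. If $r(p)>r_s$, every past-directed causal curve heads to still larger $r$ and so avoids $\mathscr S\supset\mathscr C$. If $r(p)\leq r_1$, a closed null generator from Proposition \ref{prop-H1-generated-by-null-curves} (for $p\in H_1$), a CTC in $\mathrm{R_{II}}$ supplied by the preceding proposition (for $p\in\mathrm{R_{II}}$), or the analogous constructions on $H_0$ and in $\mathrm{R_{III}}$ all yield past-inextendible causal curves contained in $\{r\leq r_1\}$, hence disjoint from $\mathscr S$. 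Finally, if $r_1<r(p)\leq r_s$ but $\theta(p)>b_{\hat c}(r(p))$, pick the unique $c>\hat c$ with $b_c(r(p))=\theta(p)$ and consider the curve $\sigma(s) \pdef (\nu(p),s,b_c(s),\phi(p))$; since $b_c'(s)=1/\sqrt{\Delta(s)}$, direct substitution in the diagonal metric of the overbarred coordinates shows $g(\dot\sigma,\dot\sigma)=0$, and extending $\sigma$ to $r\to\infty$ produces a past-inextendible null curve whose crossing of $\mathscr S$ has $\theta$-coordinate $b_c(r_s)>b_{\hat c}(r_s)=\theta_0$, missing $\mathscr C$.

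The only delicate step is excluding a past-inextendible causal curve that asymptotes to some finite $r^*<\infty$ without ever reaching $\mathscr S$; handling it cleanly requires both the slope bound (to bound $\theta$ along a hypothetical trapped curve) and the strong causality of $\mathrm{R_I}$. Everything else amounts to integrating $1/\sqrt{\Delta}$ against the boundary function $b_{\hat c}$ and case-by-case inspection.
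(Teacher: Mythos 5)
Your core argument coincides with the paper's: positivity of the $(\nu,\phi)$-block of the metric on $\mathrm{R_I}$ (via identity (3) of Proposition \ref{prop-metric-identities} and $\Delta>0$) reduces causality to the slope bound $\sqrt{\Delta}\,|\dot\theta|\le|\dot r|$, which integrates against \eqref{equ-diffequ-boundary-functions} to show that every past-inextendible causal curve from a point of the right-hand set which reaches $\mathscr S$ lands in $\mathscr C$; the saturating null curves $\theta=b_c(r)$ likewise furnish the excluding curves for points with $\theta(p)>b_{\hat c}(r(p))$. Where you genuinely diverge is the step showing $r$ is unbounded pastwards: you bound $\theta$ by integrating the slope estimate, note $\nu,\phi\in S^1$, and conclude the curve would be totally imprisoned in a compact subset of $\mathrm{R_I}$, contradicting strong causality of $\mathrm{R_I}$ (which holds because $r$ is a temporal function there). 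The paper instead shows directly that if $r$ were bounded then all four coordinates would converge, so the curve would be extendible --- at the cost of a separate case for curves limiting onto the axis $S$. Your route is legitimate (apply the imprisonment result of \cite{HawEllis} to the spacetime $\mathrm{R_I}$, in which the curve is inextendible and imprisoned in a compact set) and avoids that case split, though you should say a word, as the paper does, about why the $\theta$-bound persists across the axis where the coordinates degenerate.

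There is one genuine error, in the inclusion $\subseteq$ for points with $r(p)<r_1$. You invoke ``analogous constructions on $H_0$ and in $\mathrm{R_{III}}$'' yielding closed causal curves. For $H_0$ the closed null generators do exist, by the same computation as in Proposition \ref{prop-H1-generated-by-null-curves}. But in $\mathrm{R_{III}}$ one has $\Delta>0$, the hypersurfaces $r=\mathrm{const}$ are spacelike, $r$ is strictly monotone along causal curves, and hence there are no closed causal curves at all --- indeed the paper's proposition on CTCs asserts they exist only in $\mathrm{R_{II}}$. That sub-case must be handled differently: either observe, as the paper does uniformly for all $p$ with $r(p)\le r_1$, that a past-inextendible causal curve from $p$ which is to reach $\mathscr C$ must first cross $H_1$, where one may divert it along the closed null generator so that it never returns to $r>r_1$; or use the monotonicity of $r$ in $\mathrm{R_{III}}$ directly. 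The fix is short, but as written the claim for $\mathrm{R_{III}}$ is false and the case is not covered.
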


\begin{proof}
 Let $\mathscr D$ denote the set on the right hand side. Suppose $p \in \mathscr D$ and let $\alpha \colon [0,B) \to M$ be a past pointing inextendible causal curve starting at $p$ (here $B \leq \infty$). As we have seen, the fact that $\alpha$ is past pointing implies $\dot r >0$ for $\alpha$; thus 
\begin{equation*}
 \alpha[0,B) \subset \mathrm{R_I}
\end{equation*}
and therefore the metric components $g_{\nu \nu}$ and $g_{\phi \phi}$ in the expression
\begin{equation*}
 \langle \alpha', \alpha' \rangle = -\frac{\rho^2}{\Delta} \dot r^2 + \rho^2 \dot \theta^2 + \left[ g_{\nu \nu} \dot \nu^2 + g_{\phi \phi} \dot \phi^2 +2g_{\nu \phi} \dot \nu \dot \phi \right]
\end{equation*}
are positive along $\alpha$. Because of Prop. \ref{prop-metric-identities}, \eqref{equ-identity-for-determinant} and $\Delta > 0$ in $\mathrm {R_I}$, the term in square brackets is non-negative and we conclude
\begin{equation*}
 \alpha \text{ causal} \qquad \Rightarrow \qquad -\frac{\rho^2}{\Delta} \dot r^2 + \rho^2 \dot \theta^2 \leq 0 \qquad \text{for all } s\in [0,B).
\end{equation*}
This last inequality can be written as
\begin{equation*} \label{equ-inequ-theta-r-Delta}
 \frac{\dot \theta^2}{\dot r^2} \leq \frac{1}{\Delta}.
\end{equation*}
Although we used the coordinate system $(\nu,r,\theta,\phi)$ to deduce this result, an argument using continuity shows that it holds on $S$, too.
Now, the function $s \mapsto r(s)$ is strictly increasing, hence invertible. This allows us to write $\theta$ as a function of $r$ and then the chain rule implies
\begin{equation} \label{equ-inequality-for-theta}
 \theta'(r)^2 = \frac{\dot \theta^2}{\dot r^2} \leq \frac{1}{\Delta} \qquad \Rightarrow \qquad |\theta'(r)| \leq \frac{1}{\sqrt{\Delta}}.
\end{equation}
In view of \eqref{equ-diffequ-boundary-functions}, this secures that $\alpha$ meets $\mathscr C$ if it meets $\mathscr S$ at all.\\
We will show now, that $\alpha$ does indeed meet $\mathscr S$. In fact, we show that $r$ is unbounded above on $[0,B)$. Suppose the contrary, i.e. $r(s) \leq \tilde r$ for all $s \in [0,B)$ and some $\tilde r >0$. Then $\tilde r_0 \pdef \lim_{s \to B} r(s) < \infty$ exists. Furthermore, $\lim_{r \to \tilde r} \theta (r)$ exists, for if it did not exist, $\theta'$ would be unbounded which contradicts \eqref{equ-inequality-for-theta}. But this implies that $\lim_{s \to B} \theta(s)$ exists. Using the inverse function of $s \mapsto r(s)$, we write
\begin{equation} \label{equ-bound-on-derivatives}
 \rho^2 \theta'(r)^2 + g_{\nu \nu} \nu'(r)^2 + g_{\phi \phi} \phi'(r)^2 + 2 g_{\nu \phi} \nu'(r) \phi'(r) \leq \frac{\rho^2}{\Delta}
\end{equation}
for points $\alpha(s) \in \mathrm{R_I} \setminus S$. Assume for the moment $\lim_{s \to B} \theta (s) \neq 0$. Then we are allowed to use \eqref{equ-bound-on-derivatives} for all $s$ sufficiently close to $B$. Because the r.h.s. of \eqref{equ-bound-on-derivatives} is bounded, the limits of $\nu$ and $\phi$ as $s$ tends to $B$ exist as well, showing that $\alpha$ would be extendible.\\
If $\lim_{s \to B} \theta (s) =0$, we use \eqref{equ-metric-Pseudo-kerr-on-S-overbar-coord} to deduce
\begin{equation*}
 \left(1 - \frac{2mr}{r^2 + a^2} \right) \nu'(r)^2 \leq \frac{\rho^2}{\Delta}
\end{equation*}
for points $\alpha(s)$ on $S$. Together with \eqref{equ-bound-on-derivatives} for $\alpha(s) \notin S$, this inequality shows that $\nu'$ is bounded and therefore the limit $\lim_{s\to B} \nu(s)$ exists. The existence of the limits for $r$ and $\theta$ as $s$ approaches $B$ shows that $\lim_{s \to B} \Pr(\alpha(s))$ exists. All in all, $\alpha$ would again be extendible.\\
It remains to show that $D^+(\mathscr C)$ does not contain any points of $M \setminus \mathscr D$. For $p \in M \setminus \mathscr D$ with $r(p) > r_s$ the outward going (i.e. $r$ increasing) $r$-coordinate line from $r(p)$ onwards is a past pointing causal curve that never meets $\mathscr C$. Hence in that case $p \notin D^+(\mathscr C)$. Nor can any $p$ with $r(p) \leq r_1$ lie in $D^+(\mathscr C)$ because by Prop. \ref{prop-H1-generated-by-null-curves} each point $p \in H_1$ lies on a null curve entirely contained in $H_1$ and a past inextendible curve $\alpha$ with $r(\alpha(0)) < r_1$ would have to cross $H_1$ before it reaches $\mathscr C$. A point $p$ violating the second condition in \eqref{equ-future-domain-of-dep-of-C}, i.e. $\theta(p) > b_{\hat c}(r(p))$ cannot lie in  $D^+(\mathscr C)$ since it is the starting point of the past pointing null curve
\begin{equation*}
 s \mapsto (0, r(p) + s, b_{\tilde c}(r(p) +s), 0),
\end{equation*}
where $\tilde c$ is chosen so that $b_{\tilde c}(r(p)) = \theta(p)$. Both functions $b_{\hat c}$ and $b_{\tilde c}$ satisfy \eqref{equ-diffequ-boundary-functions}, but $b_{\tilde c}(r(p)) = \theta(p) > b_{\hat c}(r(p))$. Hence at $r_s$ we have again $b_{\tilde c}(r_s) > b_{\hat c}(r_s)$ and therefore this curve does not meet $\mathscr C$ (but it does meet $\mathscr S$).
\end{proof}

\begin{figure}[h] \centering
 \includegraphics[scale=0.8]{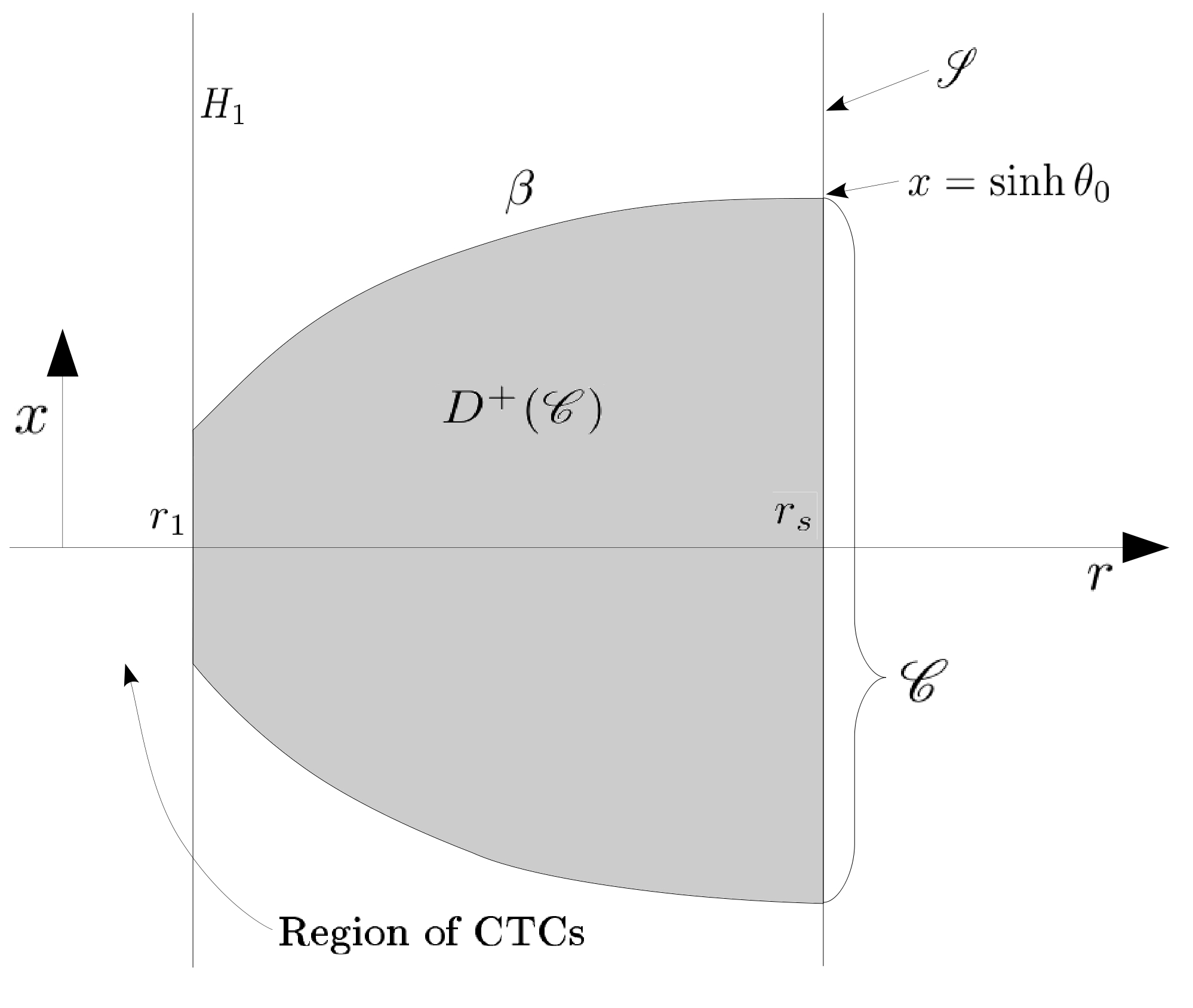}
 \caption{The future domain of dependence of $\mathscr C$ contains points of the horizon $H_1$ in its boundary.}
\end{figure}

In order to make sure that $\overline{D^+(\mathscr C)} \cap H_1 \neq \emptyset$ 
we require $b_{\hat c}(r_1) > 0$. This condition is equivalent to 
\begin{equation*}
 \theta_0 > \ln \frac{r_s - m + \sqrt{\Delta(r_s)}}{r_1-m} > 0.
\end{equation*}
Together with the fact that the null curve \eqref{equ-closed-null-curve-H1} is closed and has non-empty intersection with $\overline{D^+(\mathscr C)}$ we have proved:
\begin{prop}
 The pseudo Kerr spacetime $M$ has TM-structure.
\end{prop}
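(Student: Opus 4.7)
The plan is to verify each of the three conditions (TM1)--(TM3) of Definition~\ref{def-TM-structure} for the ingredients already assembled in the excerpt. I would take the causality-safe open set to be $\mathcal U \pdef \mathrm{R_I}$, the spacelike hypersurface $\mathscr S = \{p \in M \colon r(p) = r_s\}$ for a fixed $r_s > r_1$, and the achronal compact set $\mathscr C$ as introduced just before Proposition~\ref{prop-future-domain-dependence-Pseudo-Kerr}. On $\mathrm{R_I}$ the hypersurfaces $\{r = \text{const}\}$ are spacelike, since $g^{rr} = -\Delta/\rho^2 < 0$ when $\Delta > 0$; this makes $\mathscr S$ spacelike, giving (TM2), and forces $r$ to be strictly monotone along every causal curve in $\mathrm{R_I}$, which precludes any closed causal curve from being contained there. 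Hence $\mathcal V(M) \cap \mathrm{R_I} = \emptyset$, establishing (TM1). The set $\mathscr C$ is the continuous image of the compact product $S^1 \times [0,\theta_0] \times S^1$ (with $\theta = 0$ corresponding to the well-behaved locus $S$), so it is compact, and it is achronal as a subset of the spacelike hypersurface $\mathscr S$.

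For (TM3) I would combine Proposition~\ref{prop-future-domain-dependence-Pseudo-Kerr} with a short boundary computation. That proposition gives
\[
D^+(\mathscr C) = \{p \in M \colon r(p) \in (r_1, r_s], \; \theta(p) \leq b_{\hat c}(r(p))\},
\]
and since $b_{\hat c}(r) = \ln(r - m + \sqrt{\Delta(r)}) + \hat c$ is continuous at $r_1$, its closure contains every point with $r = r_1$ and $\theta \leq b_{\hat c}(r_1)$. Using $\Delta(r_1) = 0$ together with the normalization $b_{\hat c}(r_s) = \theta_0$, one computes
\[
b_{\hat c}(r_1) = \theta_0 - \ln\frac{r_s - m + \sqrt{\Delta(r_s)}}{r_1 - m},
\]
which is positive exactly under the standing hypothesis on $\theta_0$. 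Picking any point $p \in H_1$ with $0 < \theta(p) < b_{\hat c}(r_1)$ therefore yields $p \in \overline{D^+(\mathscr C)} \cap H_1$, and Proposition~\ref{prop-H1-generated-by-null-curves} supplies a closed null curve through $p$, so $p \in \mathcal V(M)$. This gives $\overline{D^+(\mathscr C)} \cap \mathcal V(M) \neq \emptyset$, which is (TM3).

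The only point requiring a touch of care is that the witnessing points must genuinely lie in the topological boundary of $D^+(\mathscr C)$ rather than in its interior; but since $D^+(\mathscr C) \subset \{r > r_1\}$ strictly while every candidate point satisfies $r = r_1$, such points automatically belong to $\overline{D^+(\mathscr C)} \setminus D^+(\mathscr C)$, as Definition~\ref{def-TM-structure}(TM3) demands. Beyond this bookkeeping the statement is a direct combination of Propositions~\ref{prop-H1-generated-by-null-curves} and~\ref{prop-future-domain-dependence-Pseudo-Kerr}, so I expect no genuine obstacle — the substantive work has already been carried out in those two results.
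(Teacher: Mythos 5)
Your overall architecture matches the paper's: same $\mathcal U = \mathrm{R_I}$, same $\mathscr S$ and $\mathscr C$, the explicit form of $D^+(\mathscr C)$ from Proposition~\ref{prop-future-domain-dependence-Pseudo-Kerr}, and the threshold $\theta_0 > \ln\bigl((r_s-m+\sqrt{\Delta(r_s)})/(r_1-m)\bigr)$ ensuring $b_{\hat c}(r_1)>0$. Your verification of (TM1) and (TM2) is fine (and more explicit than the paper, which leaves these implicit). But there is one genuine gap in your (TM3) step, and it sits exactly where you made a choice the paper deliberately avoids.

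You pick $p \in H_1$ with $0 < \theta(p) < b_{\hat c}(r_1)$ and assert that Proposition~\ref{prop-H1-generated-by-null-curves} ``supplies a closed null curve through $p$.'' It does not: that proposition only asserts the existence of a null curve contained in $H_1$. For $p \in H_1 \setminus S$ (i.e.\ $\theta(p)>0$) the curve it produces is $s \mapsto (\eta s + \nu(p), r_1, \theta(p), s + \phi(p))$ with $\eta = \frac{2m}{a}(m+\sqrt{m^2-a^2})$. Since both $\nu$ and $\phi$ are circular coordinates, this curve is closed only if $\eta$ is commensurate with the ratio of their periods; for generic $m,a$ it winds densely on the $(\nu,\phi)$-torus at fixed $(r_1,\theta)$ and never closes. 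Moreover, since the induced metric on $H_1$ is degenerate positive semidefinite, any causal curve in $H_1$ must run along these generators, so one cannot rescue the argument by finding some other closed causal curve inside $H_1$ through such a $p$ without further work. The fix is exactly the paper's choice: take $p \in H_1 \cap S$, i.e.\ $\theta(p)=0$. The condition $b_{\hat c}(r_1)>0$ is designed precisely so that the axis points with $r=r_1$, $\theta=0$ lie in $\overline{D^+(\mathscr C)}$, and the generator through such a point is the $\nu$-coordinate circle \eqref{equ-closed-null-curve-H1}, which is manifestly closed because $\nu$ is periodic. With that single substitution ($\theta(p)=0$ instead of $0<\theta(p)$) your argument coincides with the paper's and is correct.
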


\bibliographystyle{english_custom}

\bibliographystyle{english_custom}
\end{document}